\def\ot{\otimes}
\def\H{\textsf{H}}\def\S{\textsf{S}}
\newcommand{\out}[2]{| #1\rangle\langle #2 |}
\newcommand{\pa}[1]{(#1)}
\newcommand{\Pa}[1]{\left(#1\right)}
\newcommand{\set}[1]{\{#1\}}
\newcommand{\Set}[1]{\left\{#1\right\}}
\newcommand{\Bra}[1]{\left\langle#1\right|}
\newcommand{\ket}[1]{|#1\rangle}
\newcommand{\Ket}[1]{\left|#1\right\rangle}
\DeclareMathOperator{\trace}{Tr}
\newcommand{\ptr}[2]{\trace_{#1}\pa{#2}}
\newcommand{\Ptr}[2]{\trace_{#1}\Pa{#2}}
\newcommand{\Tr}[1]{\Ptr{}{#1}}
\newcommand{\identity}{\mathbbm{1}}
\def\cH{\mathcal{H}}
\def\sH{\mathscr{H}}
\newtheorem{thrm}{Theorem}[section]
\theoremstyle{definition}
\newtheorem{exam}[thrm]{Example}
\numberwithin{equation}{section}
\begin{document}


\title{On Conjectures of Classical and Quantum Correlations in Bipartite States}
\author{Lin Zhang and Junde Wu}
\affiliation{Department of Mathematics, Zhejiang University,
Hangzhou, 310027, P.~R.~China}
\affiliation{E-mail: {godyalin@163.com, linyz@zju.edu.cn}}

\begin{abstract}
In this paper, two conjectures which were proposed in [Phys.~Rev.~A
\textbf{82}, 052122(2010)] on the correlations in a bipartite state
$\rho^{AB}$ are studied. If the mutual information $I\Pa{\rho^{AB}}$
between two quantum systems $A$ and $B$ before any measurement is
considered as the total amount of correlations in the state
$\rho^{AB}$, then it can be separated into two parts: classical
correlations and quantum correlations. The so-called classical
correlations $C\Pa{\rho^{AB}}$ in the state $\rho^{AB}$, defined by
the maximizing mutual information between two quantum systems $A$
and $B$ after von Neumann measurements on system $B$, we show that
it is upper bounded by the von Neumann entropies of both subsystems
$A$ and $B$, this answered the conjecture on the classical
correlation. If the quantum correlations $Q\Pa{\rho^{AB}}$ in the
state $\rho^{AB}$ is defined by $Q\Pa{\rho^{AB}}= I\Pa{\rho^{AB}} -
C\Pa{\rho^{AB}}$, we show also that it is upper bounded by the von
Neumann entropy of subsystem $B$. It is also obtained that
$Q\Pa{\rho^{AB}}$ is upper bounded by the von Neumann entropy of
subsystem $A$ for a class of states.

\pacs{03.65.Yz, 03.67.-a}
\end{abstract}
\maketitle

\section{Introduction}

In quantum information theory, each realizable physical set-up that
processes states of quantum system is described by a quantum
operation \cite{Kraus} which is mathematically represented by a
linear, completely positive super-operator from a set of quantum
states to another. The information encoded in a given quantum state
is quantified by its von Neumann entropy. In general, the
decoherence will be induced in the quantum system when the quantum
state is acted by a quantum operation. There are few general and
quantitative investigation on the decorrelating capabilities of
quantum operations although the decoherent effects of quantum
operations are popularly realized.

In order to investigate the decorrelating capabilities of quantum
operations, Luo \cite{Luo} suggested that the decorrelating
capabilities of quantum operations should be separated into
classical and quantum parts, and the decoherence involved should be
related to the quantum part. By the duality of quantum operations
and quantum states, each quantum operation can be identified with a
bipartite state via the well-known Choi-Jamio{\l}kowski isomorphism
\cite{Choi}. Thus the study of the decorrelating capabilities of
quantum operations may be transformed into the investigation of
correlations of its corresponding Choi-Jamio{\l}kowski bipartite
states. In view of this, the total correlations in a bipartite state
play an essential role in the study of the decorrelating
capabilities of quantum operations. In order to get some finer
quantitative results, after the total correlation was separated into
classical and quantum parts, two related conjectures were proposed in \cite{Luo} with some supporting examples. In this paper,
the two conjectures are investigated.

\section{Classical and quantum correlations in bipartite states}\label{sec:correlation}

Let $\sH^1$ be a finite dimensional complex Hilbert space. A quantum
operation $\Phi$ on $\sH^1$ is a completely positive linear
super-operator defined on the set of the quantum states on $\sH^1$.
It follows from (\cite{Watrous}, Prop. 5.2 and Cor. 5.5) that there
exists linear operators $\{M_\mu\}_{\mu=1}^K$ on $\sH^1$ such that
$\sum_{\mu=1}^K M^\dagger_\mu M_\mu = \identity^1$ and for each
quantum state $\rho$ on $\sH^1$, we have the Kraus representation
\begin{eqnarray*}
\Phi(\rho) = \sum_{\mu=1}^K M_\mu \rho M^\dagger_\mu.
\end{eqnarray*}
Moreover, let $\sH^2=\mathbb{C}^K$ and $\set{\ket{\mu}}_{\mu=1}^K$
be the standard orthonormal basis of $\sH^2$. If we define $V: \sH^1
\longrightarrow \sH^1 \ot \sH^2$ by \begin{eqnarray*} V\ket{\psi} &
= & \sum_{\mu=1}^K M_\mu \ket{\psi} \ot \ket{\mu},\quad \ket{\psi}
\in \sH^1,
\end{eqnarray*}
then $V$ is an isometry and for each quantum state $\rho$ on
$\sH^1$, we have the Stinespring representation
\begin{eqnarray*}
\Phi(\rho) = \Ptr{2}{V\rho V^\dagger}.
\end{eqnarray*}
It is easy to see that \begin{eqnarray*} V\rho V^\dagger & = &
\sum_{\mu,\nu} M_\mu \rho M^\dagger_\nu \ot \out{\mu}{\nu}.
\end{eqnarray*}
On the other hand, note that for each state $\rho$ on $\sH^1$,
$\ptr{1}{V \rho V^\dagger}$ is a state on $\sH^2$, thus, the map
\begin{eqnarray*}
\widehat{\Phi} : \rho \mapsto \Ptr{1}{V \rho V^\dagger} =
\sum_{\mu,\nu} \Tr{M_\mu \rho M^\dagger_\nu} \out{\mu}{\nu}
\end{eqnarray*}
is a quantum operation from quantum system $\sH^1$ to quantum system
$\sH^2$, we call it \emph{complementary} to $\Phi$.

If we consider $\sH^2$ to be the environment, then
$\widehat{\Phi}(\rho)$ is the state of the environment after the
interaction and is called a \emph{correlation matrix}. If the
initial state $\rho$ is pure, then the von Neumann entropy
$$
\S\Pa{\widehat{\Phi}(\rho)} = -
\Tr{\widehat{\Phi}(\rho)\log_2\widehat{\Phi}(\rho)}
$$
of
$\widehat{\Phi}(\rho)$ describes the entropy exchanged between
the system and the environment. Therefore,
$\S\Pa{\widehat{\Phi}(\rho)}$ is called the \emph{exchange entropy}.
The relationship among the $\S(\Phi(\rho)), \S(\rho)$, and
$\S\Pa{\widehat{\Phi}(\rho)}$ is connected by the well-known
Lindblad's entropy inequality \cite{Lindblad}:
$$
\left|\S\Pa{\widehat{\Phi}(\rho)} - \S(\rho)\right| \leqslant \S(\Phi(\rho))
\leqslant \S\Pa{\widehat{\Phi}(\rho)} + \S(\rho). \eqno (1)
$$
It follows from $\sum_{\mu=1}^K M^\dagger_\mu M_\mu = \identity^1$
that $\set{ M_\mu}_{\mu=1}^K$ describes a measurement which
transforms the initial state $\rho$ into one of the output states
$$
\rho'_\mu = \frac{1}{q_\mu} M_\mu \rho M^\dagger_\mu
$$
with probability $q_\mu = \Tr{M_\mu \rho M^\dagger_\mu}$. Thus,
$\Set{q_\mu,\rho'_\mu}$ is a quantum ensemble and its \emph{Holevo
quantity} is defined by
\begin{eqnarray*}
\chi\Pa{\Set{q_\mu,\rho'_\mu}} = \S\Pa{\sum_\mu q_\mu \rho'_\mu} - \sum_\mu q_\mu \S\Pa{\rho'_\mu}.
\end{eqnarray*}
Let $\H\Pa{\set{q_\mu}} = - \sum_{\mu=1}^Kq_\mu\log_2q_\mu$ be the
Shannon entropy of the probability distribution $\set{q_\mu}$. Then
we have the following inequality \cite{Roga1}:
$$\chi\Pa{\Set{q_\mu,\rho'_\mu}} \leqslant \S\Pa{\widehat{\Phi}(\rho)}
\leqslant \H\Pa{\set{q_\mu}}. \eqno (2)
$$

Let $\sH^R$ and $\sH^Q$ be two finite dimensional complex Hilbert
spaces. If $\Phi^Q$ is a quantum operation on $\sH^Q$, then
$\identity^R \ot \Phi^Q$ is a quantum operation on $\sH^R\ot \sH^Q$,
moreover, if $\rho^{RQ}$ is a state on $\sH^R\ot \sH^Q$ and $\rho^Q
= \Ptr{R}{\rho^{RQ}}$, then we have \cite{Roga2}:
$$
\S\Pa{\widehat{\identity^R \ot \Phi^Q}\Pa{\rho^{RQ}}} =
\S\Pa{\widehat{\Phi}^Q\Pa{\rho^Q}}. \eqno (3)
$$
Let $\sH^A$ and $\sH^B$ be two finite dimensional complex Hilbert
spaces, $\rho^{AB}$ is a state on $\sH^A\ot \sH^B$, $\rho^A =
\Ptr{B}{\rho^{AB}}$, $\rho^B = \Ptr{A}{\rho^{AB}}$. Then the total
correlations in $\rho^{AB}$ are usually quantified by the quantum
mutual information
\begin{eqnarray*}
I\Pa{\rho^{AB}} = \S\Pa{\rho^A} + \S\Pa{\rho^B} - \S\Pa{\rho^{AB}}.
\end{eqnarray*}

In \cite{Luo}, the author separated the total correlations
$I\Pa{\rho^{AB}}$ into classical correlations $C\Pa{\rho^{AB}}$ and quantum
correlations \cite{Ollivier} $Q\Pa{\rho^{AB}} = I\Pa{\rho^{AB}} - C\Pa{\rho^{AB}}$,  where $C\Pa{\rho^{AB}}$ was defined by
$$
C\Pa{\rho^{AB}} = \sup_{\Pi^B}I\Pa{\Pi^B\Pa{\rho^{AB}}},
$$
the sup is taken over all von Neumann measurements $\Pi^B =
\Set{\Pi^B_j}$ on $\sH^B$, and
\begin{eqnarray*}
\Pi^B\Pa{\rho^{AB}} = \sum_j \Pa{\identity^A \ot \Pi^B_j}\rho^{AB}\Pa{\identity^A \ot \Pi^B_j}
\end{eqnarray*}
is the output state after executing the nonselective measurement
$\Pi^B = \Set{\Pi^B_j}$; $\identity^A$ is the identity operator on
$\sH^A$.

In \cite{Luo}, the following conjectures are proposed :
\begin{eqnarray*}
C\Pa{\rho^{AB}} & \leqslant & \min \Set{\S\Pa{\rho^A}, \S\Pa{\rho^B}},\,\,\,\,\,\,\,\,\,\,\,\, (I)\\
Q\Pa{\rho^{AB}} & \leqslant & \min \Set{\S\Pa{\rho^A}, \S\Pa{\rho^B}}.\,\,\,\,\,\,\,\,\,\,\,\, (II)
\end{eqnarray*}

In this paper, we prove the conjecture (I). Moreover, we
show that $Q\Pa{\rho^{AB}} \leqslant \S\Pa{\rho^B}$ is always valid, and
the conjecture (II) is true if $\S\Pa{\rho^B} \leqslant \S\Pa{\rho^A}$ or
$\rho^{AB}$ is separable. It is obtained that
$Q\Pa{\rho^{AB}}$ is upper bounded by the von Neumann entropy of subsystem $A$ for a class of states. So the conjecture (II) is also
true for these states.

\section{The proof of the conjecture}\label{sec:mainresult}

Our main results are the following:

\begin{thrm}\label{th:main}
Let $\rho^{AB}$ be a quantum state on $\sH^A \ot \sH^B$. Then we
have \begin{enumerate}[(i)]
\item $C\Pa{\rho^{AB}} \leqslant \min \Set{\S\Pa{\rho^A}, \S\Pa{\rho^B}}$,

\item $Q\Pa{\rho^{AB}} \leqslant
\S\Pa{\rho^B}$, and $Q(\rho^{AB}) \leqslant \min \set{\S\Pa{\rho^A},
\S\Pa{\rho^B}}$ whenever $\S\Pa{\rho^B} \leqslant \S\Pa{\rho^A}$ or
$\rho^{AB}$ is separable.
\end{enumerate}
\end{thrm}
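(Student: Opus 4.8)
The plan is to recast the classical correlation as a Holevo quantity and then bound it from two sides. Fix a von Neumann measurement $\Pi^B=\Set{\Pi^B_j}$ on $\sH^B$ (a complete family of rank-one orthogonal projectors) and set $p_j=\Tr{\Pa{\identity^A\ot\Pi^B_j}\rho^{AB}}$ and $\rho^A_j=\frac1{p_j}\Ptr{B}{\Pa{\identity^A\ot\Pi^B_j}\rho^{AB}}$. Since the local measurement on $B$ leaves the $A$-marginal fixed, $\Ptr{B}{\Pi^B\Pa{\rho^{AB}}}=\rho^A$, while the $B$-marginal is the pinching of $\rho^B$, a short computation gives
\begin{eqnarray*}
I\Pa{\Pi^B\Pa{\rho^{AB}}}=\S\Pa{\rho^A}-\sum_j p_j\,\S\Pa{\rho^A_j},
\end{eqnarray*}
the Holevo quantity of the ensemble $\Set{p_j,\rho^A_j}$ on $\sH^A$. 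As $\sum_j p_j\,\S\Pa{\rho^A_j}\geqslant0$, we get $C\Pa{\rho^{AB}}\leqslant\S\Pa{\rho^A}$ at once. The real content of (i) is the bound $C\Pa{\rho^{AB}}\leqslant\S\Pa{\rho^B}$; note that the easy estimate $I\Pa{\Pi^B\Pa{\rho^{AB}}}\leqslant\H\Pa{\set{p_j}}$ is useless here, since pinching forces $\H\Pa{\set{p_j}}\geqslant\S\Pa{\rho^B}$.

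To reach $\S\Pa{\rho^B}$ I would pass to a purification $\ket{\psi}^{ABE}$ of $\rho^{AB}$ and record the outcome $j$ in a classical register $J$. Because the measurement is rank-one, the post-measurement state of $AE$ given $J=j$ is pure, so the conditional entropies of the two halves agree and $\sum_j p_j\,\S\Pa{\rho^A_j}=\sum_j p_j\,\S\Pa{\rho^E_j}$, where $\rho^E_j$ is the conditional state of $E$. The measurement is a channel $\sH^B\to J$, so monotonicity of the mutual information (data processing) applied to $\rho^{BE}$ yields $\sum_j p_j\,\S\Pa{\rho^E_j}\geqslant\S\Pa{\rho^{BE}}-\S\Pa{\rho^B}=\S\Pa{\rho^A}-\S\Pa{\rho^B}$, using $\S\Pa{\rho^{BE}}=\S\Pa{\rho^A}$ by purity of $\ket{\psi}^{ABE}$. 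Substituting back, $I\Pa{\Pi^B\Pa{\rho^{AB}}}\leqslant\S\Pa{\rho^A}-\Pa{\S\Pa{\rho^A}-\S\Pa{\rho^B}}=\S\Pa{\rho^B}$, and taking the supremum gives (i). I expect this to be the crux: the direct marginal bounds only supply $\S\Pa{\rho^A}$ and $\H\Pa{\set{p_j}}$, and squeezing out $\S\Pa{\rho^B}$ seems to require moving the conditional entropy from the $A$-side to the purifying $E$-side before invoking data processing.

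For (ii) I would obtain $Q\Pa{\rho^{AB}}\leqslant\S\Pa{\rho^B}$ from a matching lower bound on $C$. Apply Lindblad's inequality (1) to the operation $\identity^A\ot\Phi^B$, with $\Phi^B$ the measure-and-record channel $\sigma\mapsto\sum_j\Tr{\Pi^B_j\sigma}\out{j}{j}$; its output entropy is $\H\Pa{\set{p_j}}+\sum_j p_j\,\S\Pa{\rho^A_j}$, while by (3) the exchange entropy equals $\S\Pa{\widehat{\Phi}^B\Pa{\rho^B}}=\H\Pa{\set{p_j}}$. The upper branch of (1) then collapses to $\sum_j p_j\,\S\Pa{\rho^A_j}\leqslant\S\Pa{\rho^{AB}}$ for every rank-one measurement, whence $C\Pa{\rho^{AB}}\geqslant\S\Pa{\rho^A}-\S\Pa{\rho^{AB}}$. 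Therefore $Q\Pa{\rho^{AB}}=I\Pa{\rho^{AB}}-C\Pa{\rho^{AB}}\leqslant\Pa{\S\Pa{\rho^A}+\S\Pa{\rho^B}-\S\Pa{\rho^{AB}}}-\Pa{\S\Pa{\rho^A}-\S\Pa{\rho^{AB}}}=\S\Pa{\rho^B}$.

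The two refinements follow quickly. If $\S\Pa{\rho^B}\leqslant\S\Pa{\rho^A}$ then $\min\Set{\S\Pa{\rho^A},\S\Pa{\rho^B}}=\S\Pa{\rho^B}$ and $Q\Pa{\rho^{AB}}\leqslant\S\Pa{\rho^B}$ is already the assertion. If instead $\rho^{AB}$ is separable, its conditional entropy is non-negative, so $\S\Pa{\rho^{AB}}\geqslant\S\Pa{\rho^B}$ and thus $\S\Pa{\rho^B}-\S\Pa{\rho^{AB}}\leqslant0\leqslant C\Pa{\rho^{AB}}$; rearranging gives $Q\Pa{\rho^{AB}}\leqslant\S\Pa{\rho^A}$, which together with $Q\Pa{\rho^{AB}}\leqslant\S\Pa{\rho^B}$ yields $Q\Pa{\rho^{AB}}\leqslant\min\Set{\S\Pa{\rho^A},\S\Pa{\rho^B}}$.
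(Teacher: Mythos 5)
Your proof is correct, and the interesting divergence from the paper is concentrated in the key half of (i), the bound $C\Pa{\rho^{AB}}\leqslant\S\Pa{\rho^B}$. The paper gets this bound by constructing an explicit discard-and-replace channel $\Psi\Pa{\sigma^{AB}}=\Ptr{B}{\sigma^{AB}}\ot\out{\omega^B}{\omega^B}$ with Kraus operators $\identity^A\ot\out{\omega^B}{\psi^B_j}$, observing that its complementary output $\widehat{\Psi}\Pa{\rho^{AB}}=W\rho^B W^\dagger$ is unitarily equivalent to $\rho^B$ (with $W=\sum_j\out{j}{\psi^B_j}$), and then invoking the cited Roga--Fannes--Zyczkowski inequality (2), $\chi\leqslant\S\Pa{\widehat{\Psi}\Pa{\rho^{AB}}}$, for the ensemble $\Set{p_j,\rho^A_j\ot\out{\omega^B}{\omega^B}}$ generated by $\Psi$. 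You instead purify $\rho^{AB}$ to $\ket{\psi}^{ABE}$, use the rank-one character of the measurement to move the conditional entropies from $A$ to the purifying system ($\S\Pa{\rho^A_j}=\S\Pa{\rho^E_j}$, since the conditional state of $AE$ is pure), and then apply monotonicity of mutual information under the measurement channel on $B$ to get $\sum_j p_j\S\Pa{\rho^E_j}\geqslant\S\Pa{\rho^A}-\S\Pa{\rho^B}$; all steps check out. Your route is self-contained modulo the standard data-processing inequality, whereas the paper outsources essentially that content to its quoted inequality (2); conversely, the paper's argument exhibits the bound as a clean specialization of the exchange-entropy estimate, which is the theme it returns to in the concluding remarks. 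For (ii) the two arguments are essentially identical: your measure-and-record channel is just the paper's pinching conjugated by the unitary $W$, i.e.\ $\Phi^B\Pa{\sigma}=W\,\Pi^B\Pa{\sigma}\,W^\dagger$, and both proofs combine Lindblad's inequality (1) with equality (3) to obtain $\sum_j p_j\S\Pa{\rho^A_j}\leqslant\S\Pa{\rho^{AB}}$ --- equivalently $C\Pa{\rho^{AB}}\geqslant\S\Pa{\rho^A}-\S\Pa{\rho^{AB}}$, which is the paper's inequality (4) in disguise --- from which $Q\Pa{\rho^{AB}}\leqslant\S\Pa{\rho^B}$ follows; the separable case is also handled the same way in both, via the fact (Fan's result, i.e.\ non-negativity of conditional entropy for separable states) that $\S\Pa{\rho^{AB}}\geqslant\S\Pa{\rho^B}$, giving $C\Pa{\rho^{AB}}\geqslant 0\geqslant\S\Pa{\rho^B}-\S\Pa{\rho^{AB}}$ and hence $Q\Pa{\rho^{AB}}\leqslant\S\Pa{\rho^A}$.
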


\begin{proof}
\begin{enumerate}[(i)]
\item  Let $\Set{\Ket{\psi^B_j}}_{j=1}^k$ be a orthonormal basis of
$\sH^B$ and $\Pi^B_j = \Ket{\psi^B_j}\Bra{\psi^B_j}$. Then
$\Tr{\identity^A \ot \Pi^B_j \rho^{AB} \identity^A \ot \Pi^B_j}
= \Bra{\psi^B_j} \rho^B \Ket{\psi^B_j}$. If we denote
$\Bra{\psi^B_j} \rho^B \Ket{\psi^B_j}$ by $p_j$, then  $p_j\geqslant
0$ and $\sum_jp_j=1$. Without loss of generality, we assume that all
$p_j>0$. Now, we define
$$
\rho^A_j = \frac{
\Bra{\psi^B_j} \rho^{AB} \Ket{\psi^B_j}}{p_j},
$$
then $\rho^A_j$ is a state on $\sH^A$ and
$$
\Pi^B\Pa{\rho^{AB}}  = \sum_j p_j \rho^A_j \ot \Pi^B_j, \quad \Pi^B\Pa{\rho^B}  =  \sum_j \Pi^B_j\rho^B\Pi^B_j  =  \sum_j p_j
\Pi^B_j,\quad
\rho^A  = \sum_j p_j \rho^A_j.
$$
Thus,
$$
\S\Pa{\Pi^B\Pa{\rho^{AB}}}  =  \H\Pa{\set{p_j}} + \sum_j p_j \S\Pa{\rho^A_j},\quad \S\Pa{\Pi^B\Pa{\rho^B}}  =  \H\Pa{\set{p_j}},
$$
and
\begin{eqnarray*}
I\Pa{\Pi^B\Pa{\rho^{AB}}} & = & \S\Pa{\rho^A} + \S\Pa{\Pi^B\Pa{\rho^B}} - \S\Pa{\Pi^B\Pa{\rho^{AB}}} \\
 & = & \S\Pa{\rho^A} - \sum_j p_j \S\Pa{\rho^A_j} = \chi\Pa{\Set{p_j,\rho^A_j}}.
\end{eqnarray*}
Note that $\sum_j p_j \S\Pa{\rho^A_j} \geqslant 0$. Hence
$I\Pa{\Pi^B\Pa{\rho^{AB}}}  \leqslant \S\Pa{\rho^A}$. Thus $C\Pa{\rho^{AB}} =
\sup_{\Pi^B}I\Pa{\Pi^B\Pa{\rho^{AB}}}  \leqslant \S\Pa{\rho^A}$. On the other hand, it follows from $C\Pa{\rho^{AB}} =
\sup_{\Pi^B}I\Pa{\Pi^B\Pa{\rho^{AB}}}$ and
$I\Pa{\Pi^B\Pa{\rho^{AB}}}  = \chi\Pa{\Set{p_j,\rho^A_j}}$ that in order to prove
$C\Pa{\rho^{AB}} \leqslant \S\Pa{\rho^B}$, we only need to prove
$\chi\Pa{\Set{p_j,\rho^A_j}} \leqslant \S\Pa{\rho^B}$. Note that the
quantum ensemble $\Set{p_j,\rho^A_j}$ is obtained from the quantum
operation of taking partial trace over $\cH^B$ from the quantum
state $\rho^{AB}$, this inspired us to define the following quantum
operation $\Psi$ on the quantum system $\sH^A \ot \sH^B$: Let $\Ket{\omega^B} \in \sH^B$ be a fixed unit vector, for each
quantum state $\sigma^{AB}$ on $\sH^A \ot \sH^B$,
\begin{eqnarray*}
\Psi\Pa{\sigma^{AB}} & = & \sum_{j} \Pa{\identity^A \ot \Ket{\omega^B}\Bra{\psi^B_j}} \sigma^{AB} \Pa{\identity^A \ot \Ket{\omega^B}\Bra{\psi^B_j}} \\
& = & \Ptr{B}{\sigma^{AB}} \ot \Ket{\omega^B}\Bra{\omega^B}.
\end{eqnarray*}

Let $\sH^C=\mathbb{C}^k$ and $\set{\ket{i}}_{i=1}^k$ be the standard
orthonormal basis of $\sH^C$. Then the correlation matrix
$\widehat{\Psi}\Pa{\rho^{AB}}$ is given by
\begin{eqnarray*}
\widehat{\Psi}\Pa{\rho^{AB}} & = & \sum_{i,j} \Tr{\identity^A \ot \Ket{\omega^B}\Bra{\psi^B_i} \rho^{AB} \identity^A \ot \Ket{\psi^B_j}\Bra{\omega^B}}\out{i}{j}\\
& = & \sum_{i,j}\Bra{\psi^B_i}\rho^B\Ket{\psi^B_j}\out{i}{j},
\end{eqnarray*}

If we define $W = \sum_j\Ket{j}\Bra{\psi^B_j}$, then $W^\dagger W =
\identity^B, WW^\dagger = \identity^C$, that is, $W$ is an unitary
operator from $\cH^B$ to $\cH^C$. It follows from
$\widehat{\Psi}\Pa{\rho^{AB}} = W \rho^B W^\dagger$ that
$\S\Pa{\widehat{\Psi}\Pa{\rho^{AB}}} = \S\Pa{\rho^B}$. Note that the
quantum ensemble $\Set{p_j,\rho^A_j \ot \Ket{\omega^B}\Bra{\omega^B}}$
can be obtained by the quantum operation $\Psi$ and
$\chi\Pa{\Set{p_j,\rho^A_j}} = \chi\Pa{\Set{p_j,\rho^A_j \ot
\Ket{\omega^B}\Bra{\omega^B}}}$. By using the inequality (2) we have
\begin{eqnarray*}
\chi\Pa{\Set{p_j,\rho^A_j}} &=& \chi\Pa{\Set{p_j,\rho^A_j \ot
\Ket{\omega^B}\Bra{\omega^B}}} \leqslant \S\Pa{\widehat{\Psi}\Pa{\rho^{AB}}} = \S\Pa{\rho^B}.
\end{eqnarray*}
Thus, we have proved $C\Pa{\rho^{AB}} \leqslant \min \Set{\S\Pa{\rho^A}, \S\Pa{\rho^B}}$.

\item Note that equality (3) shows that
$\S\Pa{\widehat{\identity^A \ot \Pi^B}\Pa{\rho^{AB}}} =
\S\Pa{\widehat{\Pi}^B\Pa{\rho^B}}$. Hence it follows from inequality
(1) that
$$
\S\Pa{\Pi^B\Pa{\rho^{AB}}} - \S\Pa{\rho^{AB}} \leqslant
\S\Pa{\widehat{\identity^A \ot \Pi^B}\Pa{\rho^{AB}}} =
\S\Pa{\widehat{\Pi}^B\Pa{\rho^B}} =
\H(\set{p_j}) = \S\Pa{\Pi^B\Pa{\rho^B}}.\eqno (4)
$$
On the other hand,
note that
$I\Pa{\Pi^B\Pa{\rho^{AB}}} = \S\Pa{\rho^{A}} + \S\Pa{\Pi^B\Pa{\rho^{B}}} - \S\Pa{\Pi^B\Pa{\rho^{AB}}}$,
by the definition of $Q(\rho^{AB})$ and inequality (4) we have
$$
Q\Pa{\rho^{AB}} = I\Pa{\rho^{AB}} - C\Pa{\rho^{AB}} \leqslant \S\Pa{\Pi^B
\rho^{AB}} - \S\Pa{\rho^{AB}} - \S\Pa{\Pi^B \rho^{B}} + \S\Pa{\rho^B}
\leqslant \S\Pa{\rho^B}.
$$
This showed that $Q\Pa{\rho^{AB}} \leqslant \S\Pa{\rho^B}$. Clearly, when $\S\Pa{\rho^B} \leqslant \S\Pa{\rho^A}$, it follows from
$Q\Pa{\rho^{AB}} \leqslant \S\Pa{\rho^B}$ that $Q\Pa{\rho^{AB}} \leqslant
\min \Set{\S\Pa{\rho^A}, \S\Pa{\rho^B}}$. If $\rho^{AB}$ is a separable state, then $\S\Pa{\rho^{AB}} \geqslant
\max \Set{\S\Pa{\rho^A}, \S\Pa{\rho^B}}$ \cite{Fan}. Note that
$I\Pa{\Pi^B\Pa{\rho^{AB}}} \geqslant 0$, so $\S\Pa{\rho^B} - \S\Pa{\rho^{AB}}
\leqslant I\Pa{\Pi^B\Pa{\rho^{AB}}}$. Thus, we can prove easily that
$Q\Pa{\rho^{AB}} \leqslant \min \Set{\S\Pa{\rho^A}, \S\Pa{\rho^B}}$. The
theorem is proved.

\end{enumerate}
\end{proof}
\,\,\,\,

In what follows, in order to provide a class of states $\rho^{AB}$
such that $Q\Pa{\rho^{AB}}
\leqslant \S\Pa{\rho^A}$, we need the following:

 \begin{thrm}\label{th:main'}
Let $\sH^B$ and $\sH^C$ be two finite dimensional complex Hilbert
spaces, $\rho^{BC}$ be a state on $\sH^B\ot \sH^C$, $\rho^B =
\Ptr{C}{\rho^{BC}}$, $\rho^C = \Ptr{B}{\rho^{BC}}$. Then
$\S\Pa{\rho^{BC}} = \S\Pa{\rho^B} - \S\Pa{\rho^C}$ if and only if
\begin{enumerate}[(i)]
\item $H^B$ can be factorized into the form $\sH^B = \sH^L \ot \sH^R$, and
\item $\rho^{BC} = \rho^L \ot \Ket{\Psi^{RC}}\Bra{\Psi^{RC}}$, where $\Ket{\Psi^{RC}} \in \sH^R \ot \sH^C$.
\end{enumerate}
\end{thrm}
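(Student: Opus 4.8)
The plan is to establish both implications, the reverse direction by a short entropy computation and the forward direction by a purification argument that reduces the stated equality to the saturation of subadditivity. For the reverse direction, assuming $\sH^B = \sH^L \ot \sH^R$ and $\rho^{BC} = \rho^L \ot \Ket{\Psi^{RC}}\Bra{\Psi^{RC}}$, I would use additivity of entropy over tensor products together with $\S\Pa{\Ket{\Psi^{RC}}\Bra{\Psi^{RC}}} = 0$ to get $\S\Pa{\rho^{BC}} = \S\Pa{\rho^L}$. Tracing out $\sH^C$ gives $\rho^B = \rho^L \ot \rho^R$ with $\rho^R = \Ptr{C}{\Ket{\Psi^{RC}}\Bra{\Psi^{RC}}}$, so $\S\Pa{\rho^B} = \S\Pa{\rho^L} + \S\Pa{\rho^R}$; tracing out $\sH^B$ gives $\rho^C = \Ptr{R}{\Ket{\Psi^{RC}}\Bra{\Psi^{RC}}}$, whence $\S\Pa{\rho^C} = \S\Pa{\rho^R}$ because $\Ket{\Psi^{RC}}$ is pure. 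Subtracting yields $\S\Pa{\rho^B} - \S\Pa{\rho^C} = \S\Pa{\rho^L} = \S\Pa{\rho^{BC}}$.

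For the forward direction, I would introduce an environment $\sH^E$ and a purification $\Ket{\phi} \in \sH^B \ot \sH^C \ot \sH^E$ of $\rho^{BC}$, so that $\Ptr{E}{\Ket{\phi}\Bra{\phi}} = \rho^{BC}$. The equality of spectra of complementary reductions of a pure state gives $\S\Pa{\rho^{BC}} = \S\Pa{\rho^E}$ and $\S\Pa{\rho^B} = \S\Pa{\rho^{CE}}$. Inserting these into the hypothesis $\S\Pa{\rho^{BC}} = \S\Pa{\rho^B} - \S\Pa{\rho^C}$ turns it into $\S\Pa{\rho^{CE}} = \S\Pa{\rho^C} + \S\Pa{\rho^E}$. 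Since $\S\Pa{\rho^C} + \S\Pa{\rho^E} - \S\Pa{\rho^{CE}} = \S\Pa{\rho^{CE} \| \rho^C \ot \rho^E} \geqslant 0$, with equality exactly when the state is a product, this forces $\rho^{CE} = \rho^C \ot \rho^E$.

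It then remains to extract the structure of $\rho^{BC}$ from the fact that $\Ket{\phi}$ purifies the product $\rho^C \ot \rho^E$. Writing $\rho^C = \sum_i \lambda_i \Ket{c_i}\Bra{c_i}$ and $\rho^E = \sum_j \mu_j \Ket{e_j}\Bra{e_j}$, I would form the canonical product purification $\Pa{\sum_i \sqrt{\lambda_i}\Ket{r_i}\Ket{c_i}} \ot \Pa{\sum_j \sqrt{\mu_j}\Ket{l_j}\Ket{e_j}}$ on an auxiliary space $\sH^R \ot \sH^L$ and invoke the unitary freedom in purifications: some isometry $U$ on the purifying system carries it to $\Ket{\phi}$. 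Transporting the product basis $\Set{\Ket{r_i}\ot\Ket{l_j}}$ through $U$ induces, on the support of $\rho^B$, a factorization $\sH^B = \sH^L \ot \sH^R$ in which $\Ket{\phi} = \Ket{\Psi^{RC}} \ot \Ket{\Xi^{LE}}$ with $\Ket{\Psi^{RC}} = \sum_i \sqrt{\lambda_i}\Ket{r_i}\Ket{c_i}$ and $\Ket{\Xi^{LE}} = \sum_j \sqrt{\mu_j}\Ket{l_j}\Ket{e_j}$. Tracing out $\sH^E$ then leaves $\rho^{BC} = \rho^L \ot \Ket{\Psi^{RC}}\Bra{\Psi^{RC}}$ with $\rho^L = \Ptr{E}{\Ket{\Xi^{LE}}\Bra{\Xi^{LE}}}$, which is the asserted form.

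I expect the main obstacle to be this final extraction step: showing that the isometry relating the two purifications genuinely induces a tensor-product factorization of the support of $\sH^B$, and that the regrouping of indices turns the purification into the factored pure state $\Ket{\Psi^{RC}} \ot \Ket{\Xi^{LE}}$, rather than only an abstract isomorphism. A secondary technical point to record carefully is the equality condition for subadditivity, which I would justify via the faithfulness (strict positivity) of the quantum relative entropy.
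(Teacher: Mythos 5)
Your proof is correct, and it takes a genuinely different route from the paper's. The reverse direction is the same easy computation the paper dismisses as trivial. For the forward direction, both arguments begin identically, by purifying $\rho^{BC}$ to a tripartite pure state (your environment $\sH^E$ plays the role of the paper's reference system $\sH^A$), but they then diverge. The paper rewrites the hypothesis as saturation of \emph{strong} subadditivity, $\S\Pa{\rho^{AB}} + \S\Pa{\rho^{BC}} = \S\Pa{\rho^{B}} + \S\Pa{\rho^{ABC}}$, and invokes the Hayden--Jozsa--Petz--Winter structure theorem to obtain $\sH^B = \bigoplus_{k} \sH^L_k \ot \sH^R_k$ and $\rho^{ABC} = \bigoplus_k \lambda_k \rho_k^{AL}\ot\rho_k^{RC}$; it then derives $\rho^{AC} = \rho^A \ot \rho^C$ --- exactly the product condition you obtain as $\rho^{CE} = \rho^C \ot \rho^E$ from faithfulness of relative entropy --- but uses it only to collapse the direct sum to a single block. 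You use the product condition directly: a pure state $\Ket{\phi}$ on $\sH^B \ot \sH^C \ot \sH^E$ whose $CE$-marginal is a product is a purification of $\rho^C \ot \rho^E$, so by the isometric freedom in purifications it equals, up to an isometry $U$ carrying $\sH^R \ot \sH^L$ onto the support of $\rho^B$, the product of purifications $\Ket{\Psi^{RC}} \ot \Ket{\Xi^{LE}}$, and tracing out $E$ gives the claimed form. This bypasses HJPW entirely, replacing a deep structure theorem with two elementary facts (equality in subadditivity forces a product state; purifications are unique up to isometries on the purifying system), and it makes visible why the full strength of HJPW is unnecessary here: the tripartite state is pure. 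The one caveat --- which you flag yourself, and which applies equally to the paper --- is that the argument factorizes only the support of $\rho^B$, not $\sH^B$ itself: for instance a rank-two state of the form $\rho^L \ot \Ket{\Psi^{RC}}\Bra{\Psi^{RC}}$ supported on a four-dimensional subspace of $\sH^B = \Complex^5$ satisfies the entropy equality, yet $\Complex^5$ admits no nontrivial tensor factorization; so statement (i) is literally correct only under this support reading (in the paper's proof the zero-weight blocks of the HJPW decomposition hide the same issue). Your explicit attention to this point is a strength of your write-up, not a gap.
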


\begin{proof}
$(\Longleftarrow)$ It is trivially. \\
$(\Longrightarrow)$ The quantum state $\rho^{BC}$ can be purified into a
tripartite state $\Ket{\Omega^{ABC}} \in \sH^A \ot \sH^B \ot \sH^C$,
where $\sH^A$ is a reference system. If we denote $\rho^{ABC} =
\Ket{\Omega^{ABC}}\Bra{\Omega^{ABC}}$, then
\begin{eqnarray*}
\Ptr{AB}{\rho^{ABC}} = \rho^C, \Ptr{AC}{\rho^{ABC}} = \rho^B,\\
\Ptr{C}{\rho^{ABC}} = \rho^{AB}, \Ptr{A}{\rho^{ABC}} = \rho^{BC}.
\end{eqnarray*}
Note that $\S\Pa{\rho^{ABC}} = 0$, so $\S\Pa{\rho^C} = \S\Pa{\rho^{AB}}$,
thus, we have
$$
\S\Pa{\rho^{AB}} + \S\Pa{\rho^{BC}} = \S\Pa{\rho^B} = \S\Pa{\rho^B} + \S\Pa{\rho^{ABC}},
$$
it follows from \cite{Hayden} that

\begin{enumerate}[(i)]
\item $\sH^B$ can be factorized into the form $\sH^B = \bigoplus_{k=1}^K \sH_k^L \ot \sH_k^R$,
\item $\rho^{ABC} = \bigoplus_{k=1}^K \lambda_k \rho_k^{AL} \ot \rho_k^{RC}$, \,\, where $\rho_k^{AL}$ is a state on $\sH^A \ot \sH_k^L$, $\rho_k^{RC}$ is a state on $\sH_k^R \ot \sH^C$, $\set{\lambda_k}$ is a probability distribution.
\end{enumerate}

That $\S\Pa{\rho^{BC}} = \S\Pa{\rho^B} - \S\Pa{\rho^C}$ implies  $\S\Pa{\rho^A}
+ \S\Pa{\rho^C} = \S\Pa{\rho^{AC}}$ is clear, and $\S\Pa{\rho^A}
+ \S\Pa{\rho^C} = \S\Pa{\rho^{AC}}$ if and only if $\rho^{AC} = \rho^A \ot \rho^C$
holds. By the expression form of $\rho^{ABC} = \bigoplus_{k=1}^K
\lambda_k \rho_k^{AL} \ot \rho_k^{RC}$, we have $\rho^{AC} =
\sum_{k=1}^K \lambda_k \rho_k^A \ot \rho_k^C$. Combining these facts
we have $K=1$, i.e., the statement (i) of the theorem holds. Hence
$\rho^{ABC} = \rho^{AL} \ot \rho^{RC}$, where $\rho^{AL}$ is a state
on $\sH^A \ot \sH^L$ and $\rho^{RC}$ is a state on $\sH^R \ot
\sH^C$, it follows from $\rho^{ABC}$ is pure state that both
$\rho^{AL}$ and $\rho^{RC}$ are also pure states. Therefore
$$\rho^{BC} = \Ptr{A}{\rho^{AL}} \ot \rho^{RC} = \rho^L \ot \Ket{\Psi^{RC}}\Bra{\Psi^{RC}}.$$
The statement (ii) holds and the theorem is proved.
\end{proof}

\emph{Note added.} After the present work is completed and submitted
to the arXiv, Luo \emph{et al.} \cite{Li} inform us of the
inequality (II) being not valid in general and provide a
counter-example while they give another approach to the above
inequality (I). We found also that Giorgi \cite{Giorgi} gives a proof to the inequalities (I) and (II) in the case of two qubits by monogamy of discord for pure states.

\begin{exam}

Let $\rho^{AB}$ be a bipartite state on $\sH^A \ot \sH^B$ such that
$\S\Pa{\rho^{AB}} = \S\Pa{\rho^B} - \S\Pa{\rho^{A}}$. By Theorem
\ref{th:main'}, we have $\rho^{AB} = \Ket{\Phi^{AL}}\Bra{\Phi^{AL}} \ot
\rho^R$ for $\Ket{\Phi^{AL}} \in \sH^A \ot \sH^L$, where $\rho^R$ is
a state on $\sH^R$ and $\sH^B = \sH^L \ot \sH^R$. It is easy to show
that although $\S\Pa{\rho^A} \leqslant \S\Pa{\rho^B}$, but $Q\Pa{\rho^{AB}} =
\S\Pa{\rho^A}$, so the conjecture (II) is true for these states. If $\dim \sH^A = \dim\sH^B = 2$, then $\sH^B$ cannot be factorized. This indicates that $\S\Pa{\rho^{AB}} = \S\Pa{\rho^B} - \S\Pa{\rho^{A}}$ implies that $\rho^{AB}$ is pure state. In other words, if two qubit state $\rho^{AB}$ is not pure, then $\S\Pa{\rho^{AB}} > \left|\S\Pa{\rho^B} - \S\Pa{\rho^{A}}\right|$.

\end{exam}

\section{Concluding Remarks}

The conjectures and our work are based on the assumption that
classical correlations are maximized by von Neumann measurement. In
general, it is not true, see \cite{Giorda,Galve,Yu}.

It follows from Theorem~\ref{th:main} that von Neumann measurement
performed on subsystem $B$ induced the following inequality:
$$
\chi\Pa{\Set{p_j,\rho^A_j}} \leqslant \S\Pa{\rho^B}.
$$

By using the above inequality, we studied a conjecture in \cite{Zhang},
proposed by W.~Roga in \cite{Roga1,Roga2}.



\begin{thebibliography}{99}



\bibitem{Kraus} K. Kraus, States, Effects and Operations: Fundamental Notions
of Quantum Theory (Springer, Berlin, 1983).

\bibitem{Luo} S.~Luo, S.~Fu, and N.~Li, Phys. Rev. A \textbf{82}, 052122 (2010).

\bibitem{Choi} M. Choi, Linear Algebra Its Appl. \textbf{10}, 285 (1975).

\bibitem{Watrous} J. Watrous, Theory of Quantum Information,
University of Waterloo, Waterloo (2008).

\bibitem{Lindblad} G.~Lindblad, in Quantum Aspects of Optical Communication, eds. C. Bendjaballah et
al., Springer-Verlag, Berlin, (1991).

\bibitem{Roga1} W.~Roga, M.~Fannes, and K.~Zyczkowski, Phys. Rev. Lett. \textbf{105}, 040505 (2010).

\bibitem{Roga2} W.~Roga, M.~Fannes, and K.~Zyczkowski, arXiv: quant-ph/1101.4105v1 (2011).

\bibitem{Ollivier} H.~Ollivier and W.~H.~Zurek, Phys. Rev. Lett. \textbf{88}, 017901 (2001).

\bibitem{Fan} Heng Fan, J. Phys. A: Math. Gen. \textbf{36}, 12081--12088 (2003).

\bibitem{Hayden} P.~Hayden, R.~Jozsa, D.~Petz and A.~Winter, Commun. Math. Phys. \textbf{246}, 359--374 (2004).

\bibitem{Li} N.~Li and S.~L.~Luo, \emph{in private communication}.

\bibitem{Giorgi} G.~L.~Giorgi, arXiv:quant-ph/1109.1696(2011).

\bibitem{Giorda} P.~Giorda, M.~G.~A.~Paris, Phys. Rev. Lett. \textbf{105}, 020503 (2010).

\bibitem{Galve} F.~Galve, G.~L.~Giorgi, and R.~Zambrini, Phys. Rev. A \textbf{83}, 012102(2011).

\bibitem{Yu} S.~Yu, C.~Zhang, Q.~Chen, and C.~H.~Oh, arXiv:quant-ph/1102.1301(2011).

\bibitem{Zhang} L.~Zhang and J.~Wu, arXiv:quant-ph/1110.5979.

\end{thebibliography}
\end{document}